\newtheorem{theorem}{Theorem}[section]
\newtheorem{lemma}[theorem]{Lemma}
\newtheorem{remark}[theorem]{Remark}
\newtheorem{example}[theorem]{Example}
\newenvironment {proof} {{\it Proof.}}{\hspace*{\fill}$\Box$\par\vspace{4mm}}
\newcommand{\mc}{\mathcal}
\newcommand{\mb}{\mathbb}
\newcommand{\cl}{\mbox{\rm cl}}
\newcommand{\co}{\mbox{\rm co}}
\begin{document}

\title{A convex duality approach for pricing contingent claims under partial information and short selling constraints}

\author{Kristina Rognlien Dahl \footnote{Department of Mathematics, University of Oslo. kristd@math.uio.no. The research leading to these results has received funding from the European Research Council under the European Community's Seventh Framework Programme (FP7/2007-2013) / ERC grant agreement no. 228087.}}

\maketitle

\begin{abstract}

We consider the pricing problem facing a seller of a contingent claim. We assume that this seller has some general level of partial information, and that he is not allowed to sell short in certain assets. This pricing problem, which is our primal problem, is a constrained stochastic optimization problem. We derive a dual to this problem by using the conjugate duality theory introduced by Rockafellar. Furthermore, we give conditions for strong duality to hold. This gives a characterization of the price of the claim involving martingale- and super-martingale conditions on the optional projection of the price processes.



\textbf{Keywords:} Convex duality. Mathematical finance. Pricing. Partial information.

\textbf{AMS subject classification:} 49N15, 90C15, 90C25,
90C46, 91G20.

\end{abstract}

\section{Introduction}
\label{sec: introduction}

This paper analyzes an optimization problem from mathematical finance using conjugate duality. We consider the pricing problem of a seller of a contingent claim $B$ in a discrete time, arbitrary scenario space setting. The seller has a general level of partial information, and is subject to short selling constraints. The seller's (stochastic) optimization problem is to find the minimum price of the claim such that she, by investing in a self-financing portfolio, has no risk of losing money at the terminal time $T$. The price processes are only assumed to be non-negative, stochastic processes, so the framework is model independent (in this sense).

The main contribution of the paper is a characterization of the dual of the seller's price of the claim $B$ as a $Q$-expectation of the claim, where $Q$ is a mixed martingale- and super-martingale measure with respect to the conditional expectation of the price process, see Theorem~\ref{thm: omskriving}. To the best of our knowledge, this is a new result. The mix of martingale- and super-martingale measure is due to the presence of short selling constraints on some of the assets, while the conditional expectation is due to the seller's partial information. The optimal value of this dual problem is an upper bound of the seller's price. To prove this characterization, we use a conjugate duality technique. This technique is different from what is common in the mathematical finance literature, and results in (fairly) brief proofs. Moreover, it does not rely on the reduction to a one-period model. This feature makes it possible to solve the optimization problem even though it contains partial information.

Conjugate duality (also called convex duality), which is used to analyze the seller's problem, is a general framework for studying and solving optimization problems. This framework was introduced by Rockafellar~\cite{Rockafellar}, see Appendix \ref{sec: conjugate} for a brief summary. For a further treatment of conjugate duality and its role in stochastic optimization, see Shapiro et al.~\cite{Shapiro}.

Some of the main features of this paper are:

\begin{itemize}
\item{We have a completely general filtration representing partial information. This is in contrast to for instance Kabanov and Stricker~\cite{KabanovStricker}, where delayed information is used.}
\item{The use of conjugate duality is a general approach and it provides an efficient way of deriving the dual of the seller's pricing problem, without reduction to a one-period model.}
\item{Since we use discrete time, general price processes are considered.}
\end{itemize}

The use of conjugate duality in mathematical finance is a fairly recent development. Over the last few years, Pennanen has done some pioneering work in this area, see Pennanen~\cite{Pennanen}, \cite{Pennanen2}, \cite{Pennanen3} as well as Pennanen and Perkki\"{o}~\cite{PP}. King~\cite{King} and King and Korf~\cite{KingKorf} have also worked on the connection between conjugate duality and mathematical finance. Duality theory in a broader sense is at the core of mathematical financial theory. Various kinds of duality, such as linear programming duality, Lagrange duality and the bipolar theorem, are used in many areas of finance. For instance, Pinar~\cite{Pinar1},\cite{Pinar2} applies Lagrange duality to derive dual representations for contingent claim pricing using a gain-loss criterion. In the setting of the present paper, this Lagrange duality approach is equivalent to our conjugate duality method. However, conjugate duality has the advantage that it can be generalized to a continuous time setting as well. In particular, duality theory (typically, in infinite dimensions) is used in utility maximization, hedging, analyzing convex risk measures, consumption and investment problems and optimal stopping. Kramkov and Schachermayer~\cite{KramkovSchachermayer1}, \cite{KramkovSchachermayerIgjen}, Karatzas and Shreve~\cite{KaratzasShreve} and Pham~\cite{Pham} consider duality in utility maximization problems. The books by Karatzas and Shreve~\cite{KaratzasShreve} and Pham~\cite{Pham} also consider duality in hedging. Pliska~\cite{Pliska} uses linear programming duality in arbitrage-related problems. Frittelli and Rozassa Gianin~\cite{Frittelli} apply conjugate duality to convex risk measures. Also, Rogers considers many applications of duality in mathematical finance, for instance in  consumption, investment and hedging problems, see Rogers~\cite{Rogers1} as well as Klein and Rogers~\cite{KleinRogers}. Rogers also derives a pure dual method for solving optimal stopping problems, see Rogers~\cite{Rogers2}.

For more on replication of claims under short selling constraints, see
Cvitani{\`c} and Karatzas~\cite{CK}, F\"{o}llmer and Kramkov~\cite{FK}, Jounini and Kallal~\cite{JouiniAndKallal}, Karatzas and Kou~\cite{KK}, Karatzas and Shreve~\cite{KaratzasShreve} and Pulido~\cite{Pulido}.

Kabanov and Stricker~\cite{KabanovStricker} derive a version of the Dalang-Morton-Willinger theorem under delayed information. They do this by generalizing a proof of the no-arbitrage criteria from Kabanov et al.~\cite{KSY}. Their result is related to our pricing result Theorem~\ref{thm: price2}, in the sense that it involves martingale conditions on the optional projection of the price processes. However, in contrast to Kabanov and Stricker~\cite{KabanovStricker}, we have completely general partial information (i.e., it does not need to be delayed information). Moreover, we consider pricing of claims, not arbitrage problems like in \cite{KabanovStricker}. We also have short-selling constraints and our methods, in particular the use of conjugate duality, are different than those in ~\cite{KabanovStricker}. Bouchard~\cite{Bouchard} and De Valli{\'e}re et al.~\cite{DKS} also consider no arbitrage conditions under partial information, but with transaction cost, and without short-selling constraints like we do.

The rest of the paper is organized as follows: Section~\ref{sec: shorting2} introduces the financial market model and analyzes the seller's optimization problem by deriving a dual problem using conjugate duality. Section~\ref{sec: main} consists of our main theorem with proof, and gives an alternative characterization of the dual problem. In Section~\ref{sec: strongduality} it is shown that there is no duality gap in the case without borrowing or short-selling. By combining this with the previous results, we find a characterization of the seller's price involving martingale- and super-martingale conditions. We also give a numerical example to illustrate the results. Finally, Section~\ref{sec: finalremarks} concludes, and poses some open questions for further research.

\section{Pricing with short selling constraints and partial information}
\label{sec: shorting2}

We model the financial market as follows. There is a given probability space $(\Omega, \mc{F}, P)$ consisting of a scenario space $\Omega$, a $\sigma$-algebra $\mc{F}$ on $\Omega$ and a probability measure $P$ on the measurable space $(\Omega, \mc{F})$. The financial market consists of $N+1$ assets: $N$ risky assets (stocks) and one non-risky asset (a bond). The assets each have a (not identically equal zero) stochastic price process $S_{n}(t, \omega)$, $n = 0, 1, \ldots, N$, for $\omega \in \Omega$ and $t \in \{0, 1, \ldots, T\}$ where $T < \infty$, and $S_0$ denotes the price process of the bond. We denote by $S(t,\omega) := (S_{0}(t,\omega), S_1(t,\omega), \ldots, S_N(t,\omega))$, the vector in $\mb{R}^{N+1}$ consisting of the price processes of all the assets. We assume that $S_0(t,\omega) := 1$ for all $t \in \{0, 1, \ldots,T\}$, $\omega \in \Omega$, so the market is discounted. Let $(\mc{F}_t)_{t=0}^T$ be a filtration corresponding to full information in the market. We assume that the price process $S$ is adapted to this filtration. For more on a similar kind of framework, see {\O}ksendal~\cite{Oksendal}.

 Associated with each seller in the market there is a filtration $(\mc{G}_t)_t := (\mc{G}_t)_{t=0}^{T}$, where $\mc{G}_0 = \{\emptyset, \Omega\}$ and $\mc{G}_T = \mc{F}$. The filtration represents the development of the information available to the seller. The assumptions on $\mc{G}_0$ and $\mc{G}_T$ imply that at time $0$ the seller knows nothing, while at time $T$ the true world scenario is revealed. We assume that $\mc{G}_t \subseteq \mc{F}_t$ for all $t=0, 1, \ldots, T$. This means that the seller only has partial information, in contrast to Kabanov and Stricker~\cite{KabanovStricker}, where they use delayed information. By considering a general partial information, we include for instance the possibility of unobserved/hidden processes for the seller.

Let $H_n(t,\omega)$, $n=0, 1, \cdots, N$ be the number of units of asset number $n$ the seller has at time $t \in \{0, 1, \ldots, T-1\}$ in scenario $\omega \in \Omega$. Then, the seller chooses a trading strategy
\[
H(t,\omega) := (H_0(t,\omega), H_1(t,\omega), \cdots, H_N(t,\omega))
\]
\noindent based on this information. Since the seller at each time chooses this trading strategy based on her current information, $H(t)$ is $\mc{G}_t$-measurable for all $t \in \{0, 1, \ldots, T-1\}$. Hence, the trading strategy process $(H(t))_{t \in \{0, 1, \ldots, T-1\}}$ is $(\mc{G}_t)_{t}$-adapted. Let the space of all such $(\mc{G}_t)_t$-adapted trading strategies $H$ be denoted by $\mc{H}_{\mc{G}}$.

We consider the pricing problem of a seller of a non-negative $\mc{F}$-measurable contingent claim $B$ ($B$ is non-negative without loss of generality by translation). Let $I_1 \subseteq \{1, 2, \ldots, N\}$ be a subset of the risky assets, and let $I_2 = \{1, 2, \ldots, N\} \setminus I_1$ (i.e., the compliment of $I_1$). The seller is not allowed to short sell in risky asset $S_j$, where $j \in I_1$. Also, we assume that there is no arbitrage w.r.t. $(\mc{G}_t)_t$.

Let $\Delta H(t) := H(t) - H(t-1)$. The seller's optimization problem is:
\begin{equation}
\label{eq: forsteligning}
\begin{array}{lll}
\inf_{\{v, H\}} \hspace{1cm} v \\[\smallskipamount]
\mbox{subject to} \\[\smallskipamount]
\begin{array}{rcll}
(i) \mbox{ } S(T) \cdot H(T-1) &\geq& B &\mbox{ a.s.}, \\[\smallskipamount]
(ii) \hspace{0.56cm} \mbox{ } S(t) \cdot \Delta H(t) &=& 0 &\mbox{ for } 1 \leq t \leq T-1, \mbox{ a.s.}, \\[\smallskipamount]
(iii) \hspace{1.62cm} \mbox{ } H_{j}(t) &\geq& 0 &\mbox{ for } 0 \leq t \leq T-1, \mbox{ a.s.}, j \in I_1\\[\smallskipamount]
(iv) \hspace{0.77cm} \mbox{ } S(0) \cdot H(0) &\leq& v,
\end{array}
\end{array}
\end{equation}
\noindent where $v \in \mb{R}$ and $H$ is $(\mc{G}_t)_t$-adapted. Note that the inequality $(iii)$ is the no short-selling constraint. Hence, the seller's problem is: Minimize the price $v$ of the claim $B$ such that the seller is able to pay $B$ at time $T$ (constraint $(i)$) from investments in a self-financing (constraint $(ii)$), adapted (w.r.t. the partial information) portfolio that costs less than or equal to $v$ at time $0$ (constraint $(iv)$). In addition, the trading strategy cannot involve selling short in assets $S_j$, $j \in I_1$ (constraint $(iii)$). Note that if there is a $(\mc{G}_t)_t$-arbitrage, problem~\eqref{eq: forsteligning} is unbounded. Also, the absence of arbitrage under the full information filtration $(\mc{F}_t)_t$ implies absence of arbitrage under the partial information $(\mc{G}_t)_t$.

Note that problem~\eqref{eq: forsteligning} is an infinite linear programming problem, i.e. the problem is linear with infinitely many constraints and variables. For more on infinite programming, see for instance Anderson and Nash~\cite{AndersonNash} and for a numerical method, see e.g. Devolder et al.~\cite{DevolderEtAl}.  However, if $\Omega$ is finite, \eqref{eq: forsteligning} is a linear programming problem. In this case, the problem can be solved numerically using the simplex algorithm or an interior point method, see for example Vanderbei~\cite{Vanderbei}

We will rewrite problem~\eqref{eq: forsteligning} in a way suitable for determining its dual. Clearly, one can remove constraint $(iv)$, and instead minimize over $S(0) \cdot H(0)$. Also, since there is no $(\mc{G}_t)_t$-arbitrage, it suffices to minimize over the portfolios such that $S(0) \cdot H(0) \geq 0$. Then, the pricing problem is a minimization problem with four types of constraints ($S(0) \cdot H(0) \geq 0$ is the fourth type). Now, the problem can be rewritten so it fits the conjugate duality framework (see Appendix~\ref{sec: conjugate} for a general presentation of conjugate duality or Rockafellar~\cite{Rockafellar}). Let $|I_1|$ denote the number of elements in $I_1$, that is the number of assets the seller is not allowed to short-sell in. Let $p \in [1,\infty)$ and the perturbation space $U$ be defined by
\[
\begin{array}{lll}
U := \{u = (\gamma,(w_t)_{t=1}^{T-1}, (x_t^{(j)})_{t=0, \mbox{ } j \in I_1}^{T-1},z) : u \in \mc{L}^p(\Omega, \mc{F}, P: \mb{R}^{(|I_1| + 1)T + 1})\}.
\end{array}
\]
\noindent Define (for notational convenience) $w := (w_t)_{t=1}^{T-1}$ and $x^{(j)} :=  (x_t^{(j)})_{t=0}^{T-1}$.

Let $Y := U^* = \mc{L}^q(\Omega, \mc{F}, P: \mb{R}^{(|I_1| + 1)T + 1})$, the dual space of $U$, where $\frac{1}{p} + \frac{1}{q} = 1$. Note that $y := (y_1, (y_2^t)_{t=1}^{T-1}, (\xi_t^{(j)})_{t=0, j \in I_1}^{T-1}, y_3) \in Y$ has components corresponding to $u \in U$. Note also that $u$ consists of four types of variables, $\gamma, w, (x^{(j)})_{j \in I_1}$, and $z$. Each of these variables correspond to a constraint type in the rewritten minimization problem. The same will hold for the dual variable $y$. Consider the pairing of $U$ and $Y$ using the bilinear form

\[
\begin{array}{lll}
\langle u,y \rangle = \mb{E}[u \cdot y].
\end{array}
\]

Choose the perturbation function $F : \mc{H}_{\mc{G}} \times U \rightarrow \mb{R}$ (again, see Appendix~\ref{sec: conjugate} for more on perturbation functions) in the following way:
\begin{enumerate}
 \item[($i$)] If $B - S(T) \cdot H(T-1) \leq \gamma$ a.s., $S(t) \cdot \Delta H(t) = w_t$ for all $t \in \{1, \ldots, T-1\}$ a.s., $-H_{j}(t) \leq x_t^{(j)}$ for all $t \in \{0, \ldots, T-1\}$, $j \in I_1$ a.s., $S(0) \cdot H(0) \geq z$, then let $F(H, u) :=S(0) \cdot H(0)$.
\item[($ii$)] Otherwise, let $F(H, u) := \infty$.
\end{enumerate}

The corresponding Lagrange function is
\[
\begin{array}{lll}
K(H, y) &=&S(0) \cdot H(0)+ \mb{E}[y_1 (B - S(T) \cdot H(T-1))] \\[\smallskipamount]
                    &&+\sum_{t=1}^{T-1} \mb{E}[y_2^t S(t) \cdot \Delta H(t)] - \sum_{j \in I_1} \sum_{t=0}^{T-1} \mb{E}[\xi_t^j H_{j}(t)] - \mb{E}[y_3 S(0) \cdot H(0)]
\end{array}
\]
\noindent if $y_1, \xi_t^j, y_3 \geq 0$ a.s. for all $t \in \{0, \ldots, T-1\}$ and $K(H, y) = -\infty$ otherwise. We can now determine the (conjugate) dual problem to the primal problem~\eqref{eq: forsteligning}. By collecting terms for each $H_i(t)$, the dual objective function is
\begin{equation}
\label{eq: snartdual}
\begin{array}{rllll}
g(y) :=& &\inf_{\{H \mbox{ : } (\mc{G}_t)_t-\mbox{adapted}\}} K(H,y) \\[\smallskipamount]
=& &\mb{E}[y_1 B]  
+ \sum_{i \in I_2} \inf_{H_i(0)} \{\mb{E}[H_i(0) \{S_i(0)(1-y_3) - y_2^1 S_i(1)\}]\} \\[\smallskipamount]
&&+ \sum_{j \in I_1} \inf_{H_{j}(0)} \{\mb{E}[H_{j} \{S_{j}(0)(1-y_3) - y_2^1 S_{1}(1) - \xi_0^{(j)}\}]\} \\[\smallskipamount]
&&+ \sum_{t=1}^{T-2} \big( \sum_{i \in I_2} \inf_{H_i(t)}\{\mb{E}[H_i(t)(y_2^t S_i(t) 
- y_2^{t+1} S_i(t+1))]\}\\[\smallskipamount]
&&+\sum_{j \in I_1} \inf_{H_{j}(t)} \{\mb{E}[H_{j}(t)(y_2^t S_{j}(t) - y_2^{t+1}S_{j}(t+1) - \xi_t^{(j)})]\} \big) \\[\smallskipamount]
&&+ \sum_{i \in I_2} \inf_{H_i(T-1)} \{\mb{E}[H_i(T-1)(-y_1 S_i(T) + y_2^{T-1} S_i(T-1))]\} \\[\smallskipamount]
&&+ \sum_{j \in I_1} \inf_{H_{j}(T-1)} \{\mb{E}[H_{j}(T-1)(-y_1 S_{j}(T) + y_2^{T-1}S_{j}(T-1) - \xi_{T-1}^{(j)})]\}.
\end{array}
\end{equation}

\subsection{Two Lemmas}
\label{sec: lemmas}

This section consists of two lemmas needed in the following presentation. We include the proofs for completeness.

\begin{lemma}
\label{lemma: prisingKonjugert}
Let $f$ be any random variable w.r.t. $(\Omega, \mc{F}, P)$ and let $\mc{G}$ be a sub-$\sigma$-algebra of $\mc{F}$. Let $\mc{X}$ denote the set of all $\mc{G}$-measurable random variables. Then
\[
\inf_{\{g \in \mc{X}\}} \mb{E}[fg] > -\infty
\]
if and only if $\int_A f dP = 0$ for all $A \in \mc{G}$.
\end{lemma}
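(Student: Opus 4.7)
The plan is to prove the two implications separately. The condition $\int_A f\,dP = 0$ for all $A \in \mc{G}$ is the defining property of the conditional expectation, so it is equivalent to $\mb{E}[f \mid \mc{G}] = 0$ a.s. This reformulation will be the workhorse of both directions.

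For the $(\Leftarrow)$ direction, I would assume $\mb{E}[f \mid \mc{G}] = 0$ a.s. and invoke the tower property together with the fact that any $g \in \mc{X}$ can be pulled out of a conditional expectation given $\mc{G}$. Then for every admissible $g$,
\[
\mb{E}[fg] \;=\; \mb{E}\bigl[\mb{E}[fg \mid \mc{G}]\bigr] \;=\; \mb{E}\bigl[g\,\mb{E}[f \mid \mc{G}]\bigr] \;=\; 0,
\]
so the infimum equals $0$, in particular it is finite. (Integrability of $fg$ is implicit, as in the application to \eqref{eq: snartdual} where $g = H_i(t)$ is in $\mc{L}^p$ and $f$ is a price-process combination in $\mc{L}^q$, so Hölder guarantees $fg \in \mc{L}^1$.)

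For the $(\Rightarrow)$ direction, I would argue the contrapositive. Suppose there exists $A \in \mc{G}$ with $\int_A f\,dP \neq 0$; replacing $A$ by itself or its complement relative to another $\mc{G}$-set if the sign is wrong, I may assume there is $A \in \mc{G}$ with $c := \int_A f\,dP > 0$. Consider the one-parameter family of test functions
\[
g_\lambda \;:=\; -\lambda\,\mathbf{1}_A, \qquad \lambda > 0,
\]
which all lie in $\mc{X}$ (simple $\mc{G}$-measurable, bounded, hence in every $\mc{L}^p$). Then
\[
\mb{E}[fg_\lambda] \;=\; -\lambda \int_A f\,dP \;=\; -\lambda c \;\longrightarrow\; -\infty \quad \text{as } \lambda \to \infty,
\]
so $\inf_{g \in \mc{X}} \mb{E}[fg] = -\infty$, contradicting the assumption.

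The main (minor) obstacle is bookkeeping around integrability: one has to be sure both that the test functions $g_\lambda$ are in the admissible class and that $\mb{E}[fg_\lambda]$ is well-defined, but both hold trivially because $g_\lambda$ is bounded. No deeper tool beyond the definition of conditional expectation and a linear scaling argument is needed.
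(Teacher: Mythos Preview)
Your argument is correct and, for the $(\Rightarrow)$ direction, essentially identical to the paper's: both test against scaled indicators $M\,\mathbf{1}_A$ of a set $A\in\mc{G}$ with $\int_A f\,dP\neq 0$ and let $M\to\pm\infty$. (Your detour about ``replacing $A$ by its complement'' is unnecessary---if $\int_A f\,dP<0$ just take $g_\lambda=+\lambda\,\mathbf{1}_A$---but it does no harm.)

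For the $(\Leftarrow)$ direction the two arguments differ slightly. The paper verifies $\mb{E}[fg]=0$ first for simple $\mc{G}$-measurable $g=\sum_k c_k\mathbf{1}_{A_k}$, where it is immediate from linearity and the hypothesis $\int_{A_k} f\,dP=0$, and then appeals to an approximation step. You instead observe that the hypothesis is exactly $\mb{E}[f\mid\mc{G}]=0$ a.s.\ and apply the tower property with ``take out what is known'' to get $\mb{E}[fg]=\mb{E}[g\,\mb{E}[f\mid\mc{G}]]=0$ in one stroke. Your route is marginally slicker; the paper's route is more hands-on but amounts to re-deriving the same conditional-expectation identity from first principles. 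Both arguments carry the same implicit integrability caveat (that $fg\in\mc{L}^1$ for the relevant $g$), which you flag and which is harmless in the application to \eqref{eq: snartdual}.
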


\begin{proof}
$\Rightarrow$: Assume there exists $A \in \mc{G}$ such that $\int_A f dP = K \neq 0$. Define $g(\omega) := M$ for all $\omega \in A$, where $M$ is a constant, and $g(\omega) := 0$ for all $\omega \in \Omega \setminus A$. The result follows by letting $M \rightarrow +/- \infty$

$\Leftarrow$: Prove the result for simple functions. The Lemma follows by an approximation argument.

\end{proof}

In the next lemma the notation is the same as in Lemma~\ref{lemma: prisingKonjugert}:

\begin{lemma}
\label{lemma: prisingKonjugert2}
$\inf_{\{g \in \mc{X}\}} \mb{E}[fg] > -\infty$ implies that $\inf_{\{g \in \mc{X}\}} \mb{E}[fg] = 0$.
\end{lemma}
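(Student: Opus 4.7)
The plan is to deduce this directly from Lemma~\ref{lemma: prisingKonjugert}. First I would observe that the zero function $g \equiv 0$ lies in $\mathcal{X}$, and gives $\mathbb{E}[f \cdot 0] = 0$. Hence the infimum is automatically bounded above by $0$, and the real task is to show it cannot be negative, i.e.\ that $\mathbb{E}[fg] \geq 0$ for every admissible $g \in \mathcal{X}$.

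To this end, I would invoke Lemma~\ref{lemma: prisingKonjugert}: since by hypothesis $\inf_{g \in \mathcal{X}} \mathbb{E}[fg] > -\infty$, we obtain $\int_A f \, dP = 0$ for all $A \in \mathcal{G}$. This is exactly the statement that $\mathbb{E}[f \mid \mathcal{G}] = 0$ almost surely. The tower property then gives, for any $g \in \mathcal{X}$ with $fg$ integrable,
\[
\mathbb{E}[fg] \;=\; \mathbb{E}\bigl[g \, \mathbb{E}[f \mid \mathcal{G}]\bigr] \;=\; 0,
\]
since $g$ is $\mathcal{G}$-measurable and can be pulled out of the conditional expectation. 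Combining this with the upper bound from $g \equiv 0$ yields $\inf_{g \in \mathcal{X}} \mathbb{E}[fg] = 0$, as required.

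If one prefers to avoid invoking the tower property (for instance because one wants to stay within the elementary approximation framework already used in the proof of Lemma~\ref{lemma: prisingKonjugert}), the same conclusion follows by first verifying $\mathbb{E}[fg] = 0$ for simple $\mathcal{G}$-measurable $g = \sum_i a_i \mathbf{1}_{A_i}$ directly from $\int_{A_i} f \, dP = 0$, and then approximating a general $g \in \mathcal{X}$ by simple functions in the appropriate $\mathcal{L}^q$-sense dictated by the pairing with $f \in \mathcal{L}^p$ fixed earlier in Section~\ref{sec: shorting2}.

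The argument is essentially an immediate corollary of Lemma~\ref{lemma: prisingKonjugert}, so there is no serious obstacle; the only point requiring mild care is specifying the class in which $g$ lives so that $\mathbb{E}[fg]$ is well defined. In the concrete application to problem~\eqref{eq: forsteligning} this is guaranteed by the $\mathcal{L}^p$/$\mathcal{L}^q$ duality underlying the perturbation space $U$ and its dual $Y$.
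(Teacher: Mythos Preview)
Your proof is correct, but it takes a different route from the paper. The paper's argument is purely a scaling observation: since $g=0\in\mathcal{X}$ gives $\mathbb{E}[fg]=0$, the infimum is at most $0$; and if it were strictly negative, pick $g_0\in\mathcal{X}$ with $\mathbb{E}[fg_0]<0$ and note that $\lambda g_0\in\mathcal{X}$ for all $\lambda>0$, so $\mathbb{E}[f(\lambda g_0)]=\lambda\,\mathbb{E}[fg_0]\to-\infty$, contradicting the hypothesis. This uses only the linear structure of $\mathcal{X}$ and does not invoke Lemma~\ref{lemma: prisingKonjugert} at all.

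Your approach instead feeds the hypothesis into Lemma~\ref{lemma: prisingKonjugert} to obtain $\mathbb{E}[f\mid\mathcal{G}]=0$ and then applies the tower property. This is perfectly valid and in fact yields the stronger conclusion that $\mathbb{E}[fg]=0$ for \emph{every} $g\in\mathcal{X}$, not merely that the infimum is zero. The paper's scaling argument is more elementary and self-contained, while yours makes the dependence on Lemma~\ref{lemma: prisingKonjugert} explicit and delivers a sharper statement; either is adequate for the application that follows.
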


\begin{proof}
Follows by observing that $\inf_{\{g \in \mc{X}\}} \mb{E}[fg] \leq 0$ ($g = 0$ is feasible) and the definition of the infimum.

\end{proof}

By combining Lemma~\ref{lemma: prisingKonjugert2} with Lemma~\ref{lemma: prisingKonjugert}, it follows that $\inf_{\{g \in \mc{X}\}} \mb{E}[fg] = 0$ if and only if $\int_A f dP = 0$ for all $A \in \mc{G}$.

\medskip

There exists a feasible dual solution if and only if all the infima in equation~(\ref{eq: snartdual}) are greater than $-\infty$. To derive the dual problem, we consider each of these minimization problems separately and use the comment after Lemma~\ref{lemma: prisingKonjugert} and Lemma~\ref{lemma: prisingKonjugert2}. We also use that since $\xi_t^{(j)} \geq 0$ a.e. for all $t \in \{0, 1, \ldots, T-1\}$ and $j \in I_1$, then $\int_A \xi_{t}^{(j)} dP \geq 0$ for all $A \in \mc{G}_t$ for all $t$. Also, from the derived dual feasibility conditions, it is sufficient to only maximize over solutions where $y_3=0$ $P$-a.e. Note that such a solution exists, because we have assumed that there is no $(\mc{G}_t)_t$-arbitrage. Hence, the dual problem is
\begin{equation}
\label{eq: opprdualshortvilkomega2}
\begin{array}{lll}
\sup_{\{y \in Y: y_1 \geq 0\}} \hspace{1cm} \mb{E}[y_1 B] \\[\smallskipamount]
\mbox{s.t.} \\[\smallskipamount]
\begin{array}{lrcll}
    (i) &\int_A S_i(0) dP &=& \int_A y_2^1 S_i(1) dP &\forall \mbox{ } A \in \mc{G}_0, \\[\smallskipamount]
    (i)^* &\int_A S_{j}(0) dP &\geq& \int_A y_2^1 S_{j}(1) dP &\forall \mbox{ } A \in \mc{G}_0, \\[\smallskipamount]
    (ii) &\int_A y_2^t  S_i(t) dP &=& \int_A y_2^{t+1} S_i(t+1) dP &\forall \mbox{ } A \in \mc{G}_t, t = 1, \ldots, T-2, \\[\smallskipamount]
    (ii)^* &\int_A S_{j}(t) y_2^t dP &\geq& \int_A y_2^{t+1} S_{j}(t+1) dP &\forall \mbox{ } A \in \mc{G}_t, t = 1, \ldots, T-2, \\[\smallskipamount]
    (iii) &\int_A y_2^{T-1} S_i(T-1) dP &=& \int_A y_1 S_i(T) dP &\forall \mbox{ } A \in \mc{G}_{T-1}, \\[\smallskipamount]
    (iii)^* &\int_A y_2^{T-1} S_{j}(T-1) dP &\geq& \int_A y_1 S_{j}(T)dP &\forall \mbox{ } A \in \mc{G}_{T-1}
\end{array}
\end{array}
\end{equation}
\noindent where the equality constraints $(i), (ii)$ and $(iii)$ hold for $i \in I_2$ and the inequality constraints $(i)^*, (ii)^*$ and $(iii)^*$ hold for $j \in I_1$. Note that the dual feasibility conditions come in pairs, where the only difference is whether there is $=$ (short selling allowed) or $\geq$ (short selling not allowed).

This dual problem~\eqref{eq: opprdualshortvilkomega2} is, like the primal problem~\eqref{eq: forsteligning}, an infinite linear programming problem. As before, if $\Omega$ is finite, it is a regular linear programming problem which can be solved using the simplex algorithm or an interior point method. However, this version of the dual problem is not significantly simpler to solve than the original problem. Therefore, we will rewrite problem~\eqref{eq: opprdualshortvilkomega2} in a more interpretable form, which in some cases is more attractive to solve than the primal problem.

\section{The main theorem}
\label{sec: main}

In this section, we will show our main theorem, Theorem~\ref{thm: omskriving}, which states that the dual problem~(\ref{eq: opprdualshortvilkomega2}) is equivalent to another problem involving martingale- and super-martingale conditions on the optional projection of the price process.

In the following, let $\bar{\mc{M}}_{I_1}^a(S,\mc{G})$ be the set of probability measures $Q$ on $(\Omega, \mc{F})$ that are absolutely continuous w.r.t. $P$ and are such that the price processes $S_i$ for $i \in I_2$ satisfy $E_Q[S_i(t+k) | \mc{G}_t] = E_Q[S_i(t) | \mc{G}_t]$, while for $j \in I_1$ they satisfy $E_Q[S_j(t+k) | \mc{G}_t] \leq E_Q[S_j(t) | \mc{G}_t]$ for $k \geq 0$ and $t \in 0,1, \ldots, T-k$, i.e. $Q$ is a mixed martingale and super-martingale measure for the optional projection of the price process.

\begin{theorem}
\label{thm: omskriving}
The dual problem~(\ref{eq: opprdualshortvilkomega2}) is equivalent to the following optimization problem.
\begin{equation}
\label{eq: nydualshortvilkomega}
\begin{array}{llll}
\sup_{Q \in \bar{\mc{M}}_{I_1}^a(S,\mc{G})} \mb{E}_Q[B].
\end{array}
\end{equation}
\end{theorem}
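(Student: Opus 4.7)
My plan is to establish the equivalence by matching dual-feasible tuples $y = (y_1, y_2^1, \ldots, y_2^{T-1})$ in~\eqref{eq: opprdualshortvilkomega2} to measures $Q \in \bar{\mc{M}}_{I_1}^a(S, \mc{G})$ via the identification $dQ/dP = y_1$, so that the objectives agree, $\mb{E}[y_1 B] = \mb{E}_Q[B]$.

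For one direction, given $Q \in \bar{\mc{M}}_{I_1}^a(S, \mc{G})$, I set $y_1 := dQ/dP$ and choose the canonical multipliers $y_2^t := \mb{E}[y_1 \mid \mc{F}_t]$ for $t = 1, \ldots, T-1$. Since $S_i(t)$ is $\mc{F}_t$-measurable and $A \in \mc{G}_t \subseteq \mc{F}_t$, the tower rule collapses $\int_A y_2^t S_i(t)\,dP$ to $\int_A y_1 S_i(t)\,dP = \int_A S_i(t)\,dQ$; each dual constraint then reduces directly to the corresponding $Q$-expectation identity. Constraints $(ii)$ and $(iii)$ become $\int_A S_i(t)\,dQ = \int_A S_i(t+1)\,dQ$ for $i \in I_2$, which is precisely the $Q$-martingale condition on $\mc{G}_t$; the inequality versions $(ii)^*$, $(iii)^*$ for $j \in I_1$ become the corresponding supermartingale condition. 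Constraint $(i)$ reduces to $\mb{E}[S_i(0)] = \mb{E}_Q[S_i(1)]$, which holds by the martingale condition at $t = 0$ under the standard assumption that $S_i(0)$ is deterministic.

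For the converse, starting from a feasible $y$ I set $Q := y_1 \cdot P$. First I apply the constraints to the bond ($S_0 \equiv 1$, which is implicitly in $I_2$ since it carries no short-selling restriction) and telescope $(i) \to (ii) \to (iii)$ to obtain $\mb{E}[y_1] = 1$, so $Q$ is a probability measure absolutely continuous with respect to $P$. Then I chain $(ii)$ with $(iii)$: for $A \in \mc{G}_t$ and $t \leq s \leq T-1$, iterative application of $(ii)$ (using $\mc{G}_t \subseteq \mc{G}_s$) followed by $(iii)$ yields the key identity $\int_A y_2^s S_i(s)\,dP = \int_A y_1 S_i(T)\,dP$. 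To convert this $y_2^s$-identity into the required $Q$-identity $\int_A y_1 S_i(s)\,dP = \int_A y_1 S_i(T)\,dP$, I pass to the canonical multiplier choice $y_2^s := \mb{E}[y_1 \mid \mc{F}_s]$ and invoke the tower property together with the $\mc{F}_s$-measurability of $S_i(s)$; taking this identity at $s$ and the analogue at $t$ and subtracting then delivers the $Q$-martingale identity $\mb{E}_Q[S_i(s) \mid \mc{G}_t] = \mb{E}_Q[S_i(t) \mid \mc{G}_t]$. The supermartingale conditions for $j \in I_1$ follow analogously from $(ii)^*$, $(iii)^*$ and $\xi_t^{(j)} \geq 0$.

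The main technical obstacle is the passage from the chained $y_2^s$-identities back to pure $y_1$-identities in the converse direction: one must verify that the multipliers $y_2^s$ may be replaced by the canonical $\mb{E}[y_1 \mid \mc{F}_s]$ without loss of generality. This rests on the observation that the dual constraints only control $y_2^s$ through its $\mc{G}_s$-conditional pairing with the $\mc{F}_s$-measurable price processes, and the tower rule then does all the heavy lifting to align the chained $y_2^s$-statements with the $Q$-optional-projection conditions defining $\bar{\mc{M}}_{I_1}^a(S, \mc{G})$.
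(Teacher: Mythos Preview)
Your forward direction ($Q \Rightarrow$ dual-feasible $y$) is essentially the paper's argument: set $y_1 = dQ/dP$, $y_2^t = \mb{E}[y_1 \mid \mc{F}_t]$, and collapse each constraint via the tower rule. That part is fine.

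The converse has a genuine gap. You start from an \emph{arbitrary} feasible tuple $(y_1,(y_2^t))$, define $Q$ from $y_1$, chain the constraints to obtain $\int_A y_2^s S_i(s)\,dP = \int_A y_1 S_i(T)\,dP$, and then say you ``pass to the canonical multiplier choice $y_2^s := \mb{E}[y_1 \mid \mc{F}_s]$'' to convert this into a pure $y_1$-identity. But $y_2^s$ is \emph{given}, not yours to choose. To justify replacing it by $\mb{E}[y_1 \mid \mc{F}_s]$ you would need to show that $(y_1,(\mb{E}[y_1 \mid \mc{F}_t]))$ is itself dual-feasible --- and unwinding that feasibility via the tower rule is exactly the statement $Q \in \bar{\mc{M}}_{I_1}^a(S,\mc{G})$ you are trying to prove. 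The argument is circular. Your closing paragraph acknowledges this as ``the main technical obstacle'' but the explanation offered (``the dual constraints only control $y_2^s$ through its $\mc{G}_s$-conditional pairing with the price process'') is a restatement of what the constraints say, not a proof that the given $y_2^s$ can be swapped for the canonical one.

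The paper avoids this circularity by never replacing $y_2^t$. Instead it exploits the bond constraints \emph{pointwise} on $\mc{G}_t$ (not just at $A=\Omega$ as you do for normalisation): since $S_0 \equiv 1$, constraint $(ii)$ for the bond gives $\int_A y_2^t\,dP = \int_A y_2^{t+1}\,dP$ for all $A \in \mc{G}_t$, and chaining with $(iii)$ yields $\mb{E}[y_2^t \mid \mc{G}_t] = \mb{E}[y_1 \mid \mc{G}_t]$. This $\mc{G}_t$-conditional identity is then combined with the change-of-measure formula for conditional expectations to cancel the common factor and extract the $Q$-martingale/supermartingale condition directly from the given $y_2^t$. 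That is the missing ingredient in your converse.
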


\begin{proof}
First, assume there exists a $Q \in \bar{\mc{M}}_{I_1}^a(S,\mc{G})$, i.e., a feasible solution to problem~(\ref{eq: nydualshortvilkomega}). We want to show that there is a corresponding feasible solution to problem~(\ref{eq: opprdualshortvilkomega2}).

Define $y_1 := \frac{dQ}{dP}$ (the Radon-Nikodym derivative of $Q$ w.r.t. $P$, see Shilling~\cite{Shilling}), and $y_2^t := \mb{E}[y_1 | \mc{F}_t]$ for $t = 0, 1, \ldots, T-1$. We prove that $y_1, y_2^t$ satisfy the dual feasibility conditions of problem (\ref{eq: opprdualshortvilkomega2}).

\begin{itemize}
\item{$(iii)^*$: 
    From the definition of conditional expectation, it suffices to prove
    \[
    \int_A \mb{E}[y_1 S_{j}(T) | \mc{G}_{T-1}] dP \leq \int_A y_2^{T-1} S_{}(T-1) dP \mbox{ for all } A \in \mc{G}_{T-1}, j \in I_1.
    \]
    In particular, it suffices to prove
    \[
    \mb{E}[y_1 S_{j}(T) | \mc{G}_{T-1}] \leq \mb{E}[y_2^{T-1} S_{j}(T-1) | \mc{G}_{T-1}] \mbox{ } P\mbox{-a.e}.
    \]
    By the definition of $y_2^{T-1}$, this is equivalent to
    \[
    \mb{E}[y_1 S_{j}(T) | \mc{G}_{T-1}] \leq \mb{E}[\mb{E}[y_1 | \mc{F}_{T-1}] S_{j}(T-1) | \mc{G}_{T-1}] \mbox{ } P\mbox{-a.e}.
    \]
    Since $S_{j}(T-1)$ is $\mc{F}_{T-1}$-measurable, the inequality above is the same as
    \[
    \mb{E}[y_1 S_{j}(T) | \mc{G}_{T-1}] \leq \mb{E}[\mb{E}[y_1 S_{j}(T-1)| \mc{F}_{T-1}] | \mc{G}_{T-1}] \mbox{ } P\mbox{-a.e}.
    \]
    This is, by the tower property, equivalent to
    \[
    \mb{E}[y_1 S_{j}(T) | \mc{G}_{T-1}] \leq \mb{E}[y_1 S_{j}(T-1) | \mc{G}_{T-1}] \mbox{ } P\mbox{-a.e}.
    \]

    By change of measure under conditional expectation, it is enough to show
    \[
    \mb{E}[y_1 | \mc{G}_{T-1}] \mb{E}_Q[S_{j}(T) | \mc{G}_{T-1}] \leq \mb{E}[y_1 | \mc{G}_{T-1}] \mb{E}_Q[ S_{j}(T-1) | \mc{G}_{T-1}].
    \]
    This holds, because $y_1 \geq 0$ $P$-a.e. and $Q \in \mc{\bar{M}}_{I_1}^a(S, \mc{G})$.
    }
\smallskip
\item{$(ii)^*$: First, we prove this for $t=T-2$. Note that for all $A \in \mc{G}_{T-2}$, $j \in I_1$
    \[
    \begin{array}{lll}
    \int_A y_2^{T-1} S_{j}(T-1) dP &= \int_A \mb{E}[y_2^{T-1} S_{j}(T-1) | \mc{G}_{T-2}] dP \\[\smallskipamount]
                        &=\int_A \mb{E}[\mb{E}[y_1 | \mc{F}_{T-1}] S_{j}(T-1) | \mc{G}_{T-2}] dP \\[\smallskipamount]
                        &=\int_A \mb{E}[\mb{E}[y_1 S_{j}(T-1) | \mc{F}_{T-1}] | \mc{G}_{T-2}] dP \\[\smallskipamount]
                        &=\int_A \mb{E}[y_1 S_{j}(T-1)| \mc{G}_{T-2}] dP
    \end{array}
    \]
\noindent Hence, from the definition of conditional expectation and change of measure under conditional expectation, it suffices to prove
    \begin{equation}
    \label{eq: over1}
    \begin{array}{llll}
    \mb{E}[y_2^{T-2} S_{j}(T-2)| \mc{G}_{T-2}] &\geq& \mb{E}[y_1 S_{j}(T-1)| \mc{G}_{T-2}] \\[\smallskipamount]
                &=& \mb{E}[y_1 | \mc{G}_{T-2}] \mb{E}_Q[S_{j}(T-1) | \mc{G}_{T-2}].
 \end{array}
\end{equation}

    But, by the definition of $y_2^{(T-1)}$, the tower property and change of measure under conditional expectation
    \begin{equation}
    \label{eq: over2}
    \begin{array}{lll}
    \mb{E}[y_2^{T-2} S_{j}(T-2)| \mc{G}_{T-2}] = \mb{E}[y_1 S_j(T-2)| \mc{G}_{T-2}] = \mb{E}[y_1 | \mc{G}_{T-2}] \mb{E}_Q[S_j(T-2)|\mc{G}_{T-2}].
    \end{array}	
    \end{equation}

    By combining equation~(\ref{eq: over1}) and (\ref{eq: over2}), it suffices to prove that
    \[
    \mb{E}[y_1 | \mc{G}_{T-2}] \mb{E}_Q[S_j(T-2)|\mc{G}_{T-2}] \geq \mb{E}[y_1 | \mc{G}_{T-2}] \mb{E}_Q[S_{j}(T-1) | \mc{G}_{T-2}].
    \]

    This holds, since $y_1 \geq 0$ P-a.e. and $Q \in \mc{\bar{M}}_{I_1}^a(S, \mc{G})$. Similarly, one can show $(ii)^*$ for $t = 1, \ldots, T-3$.
    }
\item{$(i)^*$: Recall that $\mc{G}_0 = \{\emptyset, \Omega\}$. The inequality is trivially true for $A = \emptyset$. Hence, it only remains to check that $\mb{E}[y_2^1 S_{j}(1)] \leq \mb{E}[S_{j}(0)] = S_{j}(0)$ for $j \in I_1$. Note that
    \[
    \begin{array}{llll}
    \mb{E}[y_2^1 S_{j}(1)] &= \mb{E}[y_2^1 S_{j}(1) | \mc{G}_0] \\[\smallskipamount]
                        &= \mb{E}[\mb{E}[y_1 | \mc{F}_1] S_{j}(1) | \mc{G}_0] \\[\smallskipamount]
                        &= \mb{E}[\mb{E}[y_1 S_{j}(1) | \mc{F}_1] | \mc{G}_0] \\[\smallskipamount]
                        &= \mb{E}[y_1 S_{j}(1) | \mc{G}_0] \\[\smallskipamount]
                        &= \mb{E}_Q[S_{j}(1)] \\[\smallskipamount]
                        &= \mb{E}_Q[S_{j}(1) | \mc{G}(0)] \\[\smallskipamount]
                        &\leq S_{j}(0)
    \end{array}
    \]
    \noindent where the second equality follows from the definition of $y_2^1$ and the inequality follows from $Q \in \bar{\mc{M}}_{I_1}^a(S,\mc{G})$. Hence, $(i)^*$ holds as well.
    }
\item{The equality conditions $(i), (ii)$ and $(iii)$ follow from the same kind of arguments, based on the definition of $\bar{\mc{M}}_{I_1}^a(S,\mc{G})$ and change of measure under conditional expectation.}
\end{itemize}

Hence, any $Q \in \bar{\mc{M}}^a_{I_1}(S, \mc{G})$ corresponds to a feasible dual solution, i.e. satisfies the constraints of the dual problem~(\ref{eq: opprdualshortvilkomega2}).

\medskip

Conversely, assume there exists a feasible dual solution $y_1 \geq 0, (y_2^t)_{t=1}^{T-1}$ of problem~(\ref{eq: opprdualshortvilkomega2}).

Define $Q(F) := \int_F y_1 dP$ for all $F \in \mc{F}$. This defines a probability measure since $y_1 \geq 0$, and one can assume that $\mb{E}[y_1] = 1$ since the dual problem (\ref{eq: opprdualshortvilkomega2}) is invariant under translation. The remaining part of the proof is to show that
\begin{equation}
Q \in \bar{\mc{M}}^a_{I_1}(S, \mc{G}),
\end{equation}

\noindent i.e., that the dual feasibility conditions of problem~(\ref{eq: opprdualshortvilkomega2}) correspond to the conditions for being in $\bar{\mc{M}}^a_{I_1}(S, \mc{G})$. We divide this into several claims, which we then prove.

\smallskip

\emph{Claim 1:} $\mb{E}_Q[S_i(T) | \mc{G}_{T-1}] = \mb{E}_Q[S_i(T-1) | \mc{G}_{T-1}]$ for $i \in I_2$.

\smallskip

\emph{Proof of Claim 1:} From the definition of conditional expectation, equation $(iii)$ in problem~(\ref{eq: opprdualshortvilkomega2}) is equivalent to $\mb{E}[y_1 S_i(T) | \mc{G}_{T-1}] = \mb{E}[y_2^{T-1} S_i(T-1) | \mc{G}_{T-1}]$.
From change of measure under conditional expectation
\begin{equation}
\label{eq: claim1over1}
\mb{E}[y_1 S_i(T) | \mc{G}_{T-1}] = \mb{E}[y_1 | \mc{G}_{T-1}] \mb{E}_Q[S_i(T) | \mc{G}_{T-1}]
\end{equation}
\noindent and
\begin{equation}
\label{eq: claim1over2}
\begin{array}{lll}
\mb{E}[y_2^{T-1} S_i(T-1) | \mc{G}_{T-1}] &= \mb{E}[y_2^{T-1} | \mc{G}_{T-1}] \mb{E}_Q[S_i(T-1) | \mc{G}_{T-1}].
\end{array}
\end{equation}

\noindent By combining equations~(\ref{eq: claim1over1}) and (\ref{eq: claim1over2}), $(iii)$ is equivalent to

\[
\mb{E}[y_1 | \mc{G}_{T-1}] \mb{E}_Q[S_i(T) | \mc{G}_{T-1}]= \mb{E}[y_2^t | \mc{G}_{T-1}] \mb{E}_Q[S_i(T-1) | \mc{G}_{T-1}].
\]

By considering equation $(iii)$ for the bond and using that the market is normalized (by assumption),
\begin{equation}
\label{eq: boks}
\int_A y_1 dP = \int_A y_2^{T-1} dP \mbox{ for all } A \in \mc{G}_{T-1}.
\end{equation}
\noindent From the definition of conditional expectation, this implies that $\mb{E}[y_2^{T-1}|\mc{G}_{T-1}] = \mb{E}[y_1 | \mc{G}_{T-1}]$. Since $y_1 > 0$ a.e.,  $\mb{E}_Q[S_i(T) | \mc{G}_{T-1}] = \mb{E}[S_i(T-1)|\mc{G}_{T-1}]$. This proves Claim 1.

\medskip

\emph{Claim 2:} $\mb{E}_Q[S_i(t+k) | \mc{G}_t] = \mb{E}_Q[S_i(t)|\mc{G}_t]$ for $k \in \mb{N}$, $i \in I_2$.

\smallskip

\emph{Proof of Claim 2:} Let $i \in I_2$. First, one can show by induction that $\mb{E}_Q[S_i(T) | \mc{G}_t] = \mb{E}_Q[S_i(t)|\mc{G}_t]$ for all $t \leq T,$ $i \in I_2$, using Claim 1. Also by an inductive argument (for $i \in I_2$), this can be generalized to Claim 3.

\medskip

\emph{Claim 3:} $\mb{E}_Q[S_{j}(T) | \mc{G}_{T-1}] \leq \mb{E}_Q[S_{j}(T-1)|\mc{G}_{T-1}]$ for $j \in I_1$.

\smallskip

\emph{Proof of Claim 3:} To prove $\mb{E}_Q[S_{j}(T) | \mc{G}_{T-1}] \leq \mb{E}_Q[S_{j}(T-1)|\mc{G}_{T-1}]$ for $j \in I_1$, we use $(iii)^*$, an argument similar to that used to show Claim 1, and Claim 2.

\medskip

\emph{Claim 4:} $\mb{E}_Q[S_{j}(T) | \mc{G}_t] \leq \mb{E}[S_{j}(t)|\mc{G}_t]$ for all $t \leq T$ and $j \in I_1$.

\smallskip

\emph{Proof of Claim 4:} Let $j \in I_1$. To show that $\mb{E}_Q[S_{j}(T) | \mc{G}_t] \leq \mb{E}[S_{j}(t)|\mc{G}_t]$ for all $t \leq T$: Note that from equation~$(ii)^*$ of problem~(\ref{eq: opprdualshortvilkomega2}) for $t+1, t+2, \ldots, T-2$, it follows that
\[
\begin{array}{llll}
\int_A y_2^t S_{j}(t) dP &\geq \int_A y_2^{t+1} S_{j}(t+1) dP  &\forall \mbox{ } A \in \mc{G}_t \\
                        &\geq \int_A y_2^{t+2} S_{j}(t+2) dP  &\forall \mbox{ } A \in \mc{G}_{t+1}, \\ &&\mbox{ in particular } \forall  \mbox{ } A \in \mc{G}_t  \\
                        &\geq \ldots \\
                        &\geq \int_A y_2^{T-1} S_{j}(T-1) dP &\forall \mbox{ } A \in \mc{G}_t \\
                        &\geq \int_A y_1 S_{j}(T) dP &\forall \mbox{ } A \in \mc{G}_t
\end{array}
\]
\noindent where the final inequality uses $(iii)^*$ from problem~(\ref{eq: opprdualshortvilkomega2}). Hence, by the definition of conditional expectation and change of measure under conditional expectation
\[
\int_A \mb{E}[y_2^t | \mc{G}_t]\mb{E}_Q[S_j(t) | \mc{G}_t] dP \geq \int_A \mb{E}[y_1 | \mc{G}_t]\mb{E}_Q[S_j(T) | \mc{G}_t] dP
\]
From equation $(ii)$ for the bond, we know that $\mb{E}[y_2^t | \mc{G}_t] = \mb{E}[y_1 | \mc{G}_t]$ (see the argument related to equation~\eqref{eq: boks}), so
\begin{equation}
\label{eq: dualitycond}
\int_A \{ \mb{E}[y_2^t|\mc{G}_t] (\mb{E}_Q[S_{j}(t)|\mc{G}_t] - \mb{E}_Q[S_{j}(T) | \mc{G}_t]) \} dP \geq 0 \mbox{  } \forall \mbox{   } A \in \mc{G}_t.
\end{equation}
If $y_2^t(A) \geq 0$, but not identically equal $0$ a.e., this implies Claim 4, i.e.:
\[
\mb{E}_Q[S_{j}(t) | \mc{G}_t](A) \geq \mb{E}_Q[S_{j}(T) | \mc{G}_t](A) \mbox{ for } A \in \mc{G}_t.
\]
If $y_2^t(A) = 0$ a.e., then $Q(A) = 0$, so $\mb{E}_Q[S_{j}(T) | \mc{G}_t](A) = 0$ by convention. Hence, since the price processes are non-negative, $\mb{E}_Q[S_{j}(t)|\mc{G}_t] \geq \mb{E}_Q[S_{j}(T) | \mc{G}_t]$. This proves Claim 4.

\medskip

\emph{Claim 5:} $\mb{E}_Q[S_{j}(t+k) | \mc{G}_t] \leq \mb{E}_Q[S_{j}(t)|\mc{G}_t]$ for $k \in \mb{N}$, $j \in I_1$.

\smallskip

\emph{Proof of Claim 5:} For $A \in \mc{G}_t$ and $j \in I_1$,
\[
\begin{array}{lll}
\int_A y_2^t S_{j}(t) dP &\geq \int_A y_2^{t+k} S_{j}(t+k) dP  \\
                            &= \int_A \mb{E}[y_2^{t+k} S_{j}(t+k) | \mc{G}_t] dP \\
                            &= \int_A \mb{E}[y_2^{t+k} | \mc{G}_t] \mb{E}_Q[S_{j}(t+k) | \mc{G}_t] dP \\
                            &= \int_A \mb{E}[y_1 | \mc{G}_t] \mb{E}_Q[S_{j}(t+k) | \mc{G}_t] dP

\end{array}
\]
\noindent where the first inequality follows from $(ii)^*$ (from problem~(\ref{eq: opprdualshortvilkomega2})) iterated and the third equality from $\mb{E}[y_2^{t+k} | \mc{G}_t] = \mb{E}[y_1 | \mc{G}_t]$ (see the proof of Claim 4). Hence, by the definition of conditional expectation and since $\mb{E}[y_2^t| \mc{G}_t] = \mb{E}[y_1 | \mc{G}_t] \geq 0$ (because $y_1 \geq 0$)
\[
\int_A \{ \mb{E}[y_1 | \mc{G}_t] (\mb{E}_Q[S_{j}(t) | \mc{G}_t]  - \mb{E}_Q[S_{j}(t+k) | \mc{G}_t] \} dP \geq 0 \; \mbox{ for all } A \in \mc{G}_t.
\]
By a similar argument as for equation~(\ref{eq: dualitycond}), Claim 5 holds, i.e.,
\[
\mb{E}_Q[S_{j}(t+k) | \mc{G}_t] \leq \mb{E}_Q[S_{j}(t)| \mc{G}_t] \mbox{ for all } k \in \mb{N}, \mbox{ } j \in I_1.
\]

\medskip

By combining these claims, we see that $Q \in \bar{\mc{M}}^a_{I_1}(S, \mc{G})$, and the theorem follows.\hspace{1cm}

\end{proof}

The version of the dual problem~\eqref{eq: nydualshortvilkomega} is attractive because of its connection to martingale measures, which are an essential part of mathematical finance literature, see e.g. Karatzas and Shreve~\cite{KaratzasShreve} and {\O}ksendal~\cite{Oksendal}. Another nice feature of the formulation~\eqref{eq: nydualshortvilkomega} is that when one has found the set $\bar{\mc{M}}^a_{I_1}(S, \mc{G})$, solving the problem for each new claim $B$ may be fairly simple (depending on structure of $\bar{\mc{M}}^a_{I_1}(S, \mc{G})$) since the set does not depend on the claim. In contrast, the primal problem~\eqref{eq: forsteligning} must be solved from scratch whenever one considers a new claim $B$.

\begin{remark}
Note that Theorem~\ref{thm: omskriving} has some similarities with Theorem 1 in Kabanov and Stricker~\cite{KabanovStricker}. However, we consider the pricing problem of a contingent claim instead of the no-arbitrage criterion, which is the topic of \cite{KabanovStricker}. Moreover, we have short-selling constraints, which \cite{KabanovStricker} do not have. Also, we have a general level of partial information (not necessarily delayed) and the techniques we use, in particular the use of convex duality, are different.

Kabanov and Stricker~\cite{KabanovStricker} also comment that, to their knowledge, their proof of the partial information Dalang-Morton-Willinger theorem is the only one that does not reduce the problem to a one-step model. Our technique, using convex duality, does not rely on reduction to a one-period model either. So (to the best of our knowledge), our method of proof must be a new way to avoid reduction to one-period in discrete time models.
\end{remark}

\section{Strong duality}
\label{sec: strongduality}

The main goal of this section is to prove that there is no duality gap, i.e., that the value of the primal problem (\ref{eq: forsteligning}) is equal to the value of the dual problem (\ref{eq: nydualshortvilkomega}). This can be done using the following theorem from Pennanen and Perkki\"{o}\cite{PP} (see Theorem 9 in \cite{PP}). In order to prove strong duality, we also assume that $I_1 = \{0, 1, \ldots, N\}$, i.e. that no short-selling or borrowing is allowed.

We use the same notation as in Section~\ref{sec: shorting2}, and consider the value function $\varphi(\cdot)$ as defined in Appendix~\ref{sec: conjugate}. In the following theorem, $H$ is a stochastic process with $N+1$ components at each time $t \in \{0, 1, \ldots, T-1\}$ and $\mc{H}_{\mc{G}}$ denotes the family of all stochastic processes that are adapted to the filtration $(\mc{G}_t)_t$. Also, $F^{\infty}$ is the \emph{recession function} of $F$, defined by
 \begin{equation}
 \label{eq: huendelig}
 F^{\infty}(H(\omega), 0, \omega) := \sup_{\lambda > 0} \frac{F(\lambda H(\omega) + \bar{H}(\omega), \bar{y}(\omega), \omega) - F(\bar{H}(\omega), \bar{y}(\omega), \omega)}{\lambda}
 \end{equation}
 \noindent (which is independent of $\bar{H}, \bar{y}$). Then, we have the following theorem:

\begin{theorem}
\label{thm: PP}
(Theorem 9, Pennanen and Perkki{o}~\cite{PP}) Assume there exists $y \in Y$ and $m \in \mc{L}^1(\Omega, \mc{F},P)$ such that for $P$-a.e. $\omega \in \Omega$,

\begin{equation}
\label{eq: PP}
F(H,u) \geq  u \cdot y + m \mbox{ a.s. for all } (H, u) \in \mb{R}^{T(N+1)} \times \mb{R}^{(|I_1| + 1)T + 1},
\end{equation}
\noindent where $(\cdot)$ denotes Euclidean inner product. Assume also that
\[
A := \{H \in \mc{H}_{\mc{G}} : F^{\infty}(H, 0) \leq 0 \; P\mbox{-a.s.}\}
\]
\noindent is a linear space. Then, the value function $\varphi(u)$ is lower semi-continuous on $U$ and the infimum of the primal problem is attained for all $u \in U$.
\end{theorem}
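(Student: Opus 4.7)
The plan is to establish lower semicontinuity of $\varphi$ and attainment of the infimum together, using the standard machinery of normal integrands and recession analysis that underlies Rockafellar's conjugate duality framework. The two hypotheses play complementary roles: the affine minorant condition~(\ref{eq: PP}) controls the values of $F$ from below, while the linearity of $A$ controls the asymptotic behaviour of minimizing sequences.

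First, I would reformulate the primal problem as minimization of the integral functional $I_F(H, u) := \mb{E}[F(H, u, \cdot)]$ over strategies $H$ adapted to $(\mc{G}_t)_t$. The inequality $F(H,u) \geq u \cdot y + m$ shows that $I_F$ is well-defined in $(-\infty, +\infty]$, that $y$ is dual feasible, and crucially that $\varphi(u) \geq \mb{E}[u \cdot y] + \mb{E}[m] > -\infty$ for every $u \in U$. This affine lower bound is precisely what lets one apply Fatou's lemma to liminfs along minimizing sequences later on.

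Second, I would analyze the recession cone $A$. The recession function $F^{\infty}(\cdot, 0)$ captures those directions along which a strategy may be scaled to infinity while leaving the perturbation fixed. The hypothesis that $A$ is a \emph{linear subspace} — not merely a cone — is the abstract no-arbitrage condition that makes the argument work: for every $H \in A$ also $-H \in A$, so $H$ and $-H$ are both costless directions of recession. This lets one decompose any minimizing sequence $H_n$ into a component along $A$ and a "bounded transversal" component modulo $A$, and discard the $A$-component without changing the objective value. Making this decomposition rigorous while preserving $(\mc{G}_t)_t$-adaptedness is where the real work lies.

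Third, I would combine the affine minorant and the recession decomposition with a Koml\'os-type extraction in the adapted-strategy space (the standard tool in stochastic programming following Pennanen): from a minimizing sequence $H_n$ one extracts forward convex combinations $\tilde H_n$ that converge $P$-a.s. to some $\bar H \in \mc{H}_{\mc{G}}$. The affine lower bound plus Fatou then gives $\mb{E}[F(\bar H, u)] \leq \liminf_n \mb{E}[F(H_n, u)] = \varphi(u)$, so the infimum is attained at $\bar H$. Running the same extraction jointly along sequences $u_n \to u$ delivers $\varphi(u) \leq \liminf_n \varphi(u_n)$, i.e.\ lower semicontinuity.

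The main obstacle is the compactness step. In the unbounded adapted-strategy space a minimizing sequence need not be bounded in any $L^p$, so one cannot extract pointwise convergent subsequences naively. Quotienting out the linear recession subspace $A$ is what produces the required compactness modulo recession; it is here that linearity (rather than mere conicity) of $A$ is decisive, because only then is the quotient a vector space on which standard Koml\'os/closedness arguments apply, and only then does the "discard the $A$-component" step actually preserve feasibility and adaptedness of the resulting strategy.
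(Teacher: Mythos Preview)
The paper does not actually prove this theorem: immediately after stating it, the paper writes ``For the proof of Theorem~\ref{thm: PP}, see \cite{PP}.'' In other words, Theorem~\ref{thm: PP} is quoted as a black box from Pennanen and Perkki\"{o}~\cite{PP} and then \emph{applied} in the proof of Theorem~\ref{thm: price2}; the present paper contributes no argument of its own for it.

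Your proposal, by contrast, sketches the genuine proof strategy from~\cite{PP}: the affine minorant~(\ref{eq: PP}) to bound $\varphi$ from below and to justify Fatou, the recession analysis to quotient out the linear space $A$ and obtain compactness modulo recession, and a Koml\'os-type extraction in the adapted-strategy space to produce a limit point. This is the correct outline and correctly identifies why \emph{linearity} of $A$ (as opposed to mere conicity) is essential. So your write-up is not wrong; it simply goes well beyond what the paper does, which is to cite~\cite{PP} and move on. If the intent was to reproduce the paper's treatment, the appropriate ``proof'' here is just the reference.
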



For the proof of Theorem~\ref{thm: PP}, see \cite{PP}. Theorem~\ref{thm: PP} gives conditions for the value function $\varphi$ (see Appendix~\ref{sec: conjugate}) to be lower semi-continuous. Hence, from Theorem~\ref{thm: dualprob}, if these conditions hold, there is no duality gap since $\varphi(\cdot)$ is convex (because the perturbation function $F$ was chosen to be convex).

\begin{remark}
Note that there is a minor difference between the frameworks of Rockafellar~\cite{Rockafellar} and Pennanen and Perkki\"{o}~\cite{PP}. In the latter it is assumed that the perturbation function $F$ is a so-called convex normal integrand. However, from Example 1 in Pennanen~\cite{Pennanen} and Example 14.29 in Rockafellar and Wets~\cite{RockafellarWets}, it follows that our choice of $F$ is in fact a convex normal integrand.
\end{remark}


The following theorem states that there is no duality gap and characterizes the seller's price of the contingent claim.


\begin{theorem}
\label{thm: price2}
Consider the setting of this paper, and assume that there is no arbitrage with respect to $(\mc{G}_t)_t$. If the seller of the claim $B$ has information $(\mc{G}_t)_t$ and no short selling or borrowing is allowed, she will offer the claim at the price
\begin{equation}
\label{eq: beta}
\begin{array}{lll}
\beta := \sup_{Q \in \bar{\mc{M}}^a(S, \mc{G})} \mb{E}_Q[B].
\end{array}
\end{equation}
\noindent where $\bar{\mc{M}}^a(S, \mc{G})$ is the set the set of probability measures $Q$ on $(\Omega, \mc{F})$ that are absolutely continuous w.r.t. $P$ and are such that the price processes satisfy $E_Q[S_j(t+k) | \mc{G}_t] \leq E_Q[S_j(t) | \mc{G}_t]$ for $k \geq 0$ and $t \in 0,1, \ldots, T-k$.
\end{theorem}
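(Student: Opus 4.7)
The plan is to apply Theorem~\ref{thm: PP} to the perturbation function $F$ of Section~\ref{sec: shorting2}, which gives lower semi-continuity of the value function $\varphi$; together with convexity of $\varphi$ (inherited from $F$), the conjugate duality framework in Appendix~\ref{sec: conjugate} then yields absence of a duality gap, and Theorem~\ref{thm: omskriving} identifies the common optimal value with the supremum over $\bar{\mc{M}}^a(S,\mc{G})$ appearing in~\eqref{eq: beta}.

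First I would verify the affine-minorant condition~\eqref{eq: PP}. A convenient choice is $y \in Y$ with $y_3 \equiv 1$ and every other component identically zero, together with $m \equiv 0$. If $F(H,u) = +\infty$ the inequality $F(H,u) \geq u \cdot y + m$ is trivial; if $F(H,u) < \infty$, the fourth group of constraints in the definition of $F$ gives $F(H,u) = S(0) \cdot H(0) \geq z = u \cdot y$, as required.

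Second I would show that
\[
A = \{H \in \mc{H}_{\mc{G}} : F^{\infty}(H, 0) \leq 0 \; P\mbox{-a.s.}\}
\]
is a linear subspace. A direct computation from~\eqref{eq: huendelig} at any feasible reference point $(\bar H, \bar u)$ shows that $F^{\infty}(H,0)$ equals $S(0) \cdot H(0)$ on the cone of $H$ satisfying the ``homogenized'' constraints ($S(T) \cdot H(T-1) \geq 0$, $S(t) \cdot \Delta H(t) = 0$ for $1 \leq t \leq T-1$, $H_j(t) \geq 0$ for all $j \in I_1 = \{0, \ldots, N\}$ and $t$, and $S(0) \cdot H(0) \geq 0$), and equals $+\infty$ otherwise. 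Hence $H \in A$ forces $H_j(t) \geq 0$ everywhere, self-financing, $S(0) \cdot H(0) = 0$, and $S(T) \cdot H(T-1) \geq 0$. Since $S_0 \equiv 1$ and every $S_n \geq 0$, the equality $S(0) \cdot H(0) = 0$ with $H(0) \geq 0$ componentwise forces $H_n(0) = 0$ whenever $S_n(0) > 0$; self-financing propagates this forward and the no-$(\mc{G}_t)_t$-arbitrage assumption eliminates any non-trivial portfolio that might still survive (as it would otherwise deliver non-negative terminal wealth from zero initial cost). Therefore $A = \{0\}$, which is trivially a linear subspace.

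With both hypotheses of Theorem~\ref{thm: PP} verified, $\varphi$ is lower semi-continuous, and convexity of $F$ makes $\varphi$ convex. The conjugate duality theorem of the Rockafellar framework then yields equality of the optimal values of~\eqref{eq: forsteligning} and~\eqref{eq: opprdualshortvilkomega2}. Theorem~\ref{thm: omskriving} rewrites the dual value as $\sup_{Q \in \bar{\mc{M}}^a_{I_1}(S,\mc{G})} \mb{E}_Q[B]$, and with $I_1 = \{0,1,\ldots,N\}$ every equality constraint defining $\bar{\mc{M}}^a_{I_1}(S,\mc{G})$ is vacuous, so this set coincides with $\bar{\mc{M}}^a(S,\mc{G})$ from the statement. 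The seller's price is therefore $\beta$. The main obstacle I foresee is the linearity-of-$A$ step, specifically the propagation argument ruling out non-trivial non-negative self-financing portfolios of zero cost; this is the only place where the full ``no short-selling or borrowing'' hypothesis (rather than a partial one) is essential, since otherwise $A$ contains genuinely non-trivial recession directions and a compactness/closure argument would be needed in place of the present reduction to $\{0\}$.
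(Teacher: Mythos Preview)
Your proposal is correct and follows essentially the same approach as the paper: verify the two hypotheses of Theorem~\ref{thm: PP} (affine minorant and linearity of the recession set $A$), conclude lower semi-continuity of $\varphi$ and hence no duality gap, then invoke Theorem~\ref{thm: omskriving} with $I_1 = \{0,1,\ldots,N\}$. The only notable difference is cosmetic: for the affine minorant you take $y_3 \equiv 1$, $m \equiv 0$, which gives $F(H,u) = S(0)\cdot H(0) \geq z = u\cdot y + m$ directly from the fourth constraint, whereas the paper chooses $y_3 \equiv -1$, $m \equiv -1$; your choice is arguably cleaner. Your discussion of why $A=\{0\}$ (zero initial cost plus componentwise non-negativity forcing $H(0)=0$ on assets with positive price, then propagation and no-arbitrage) is more explicit than the paper, which simply asserts $A=\{0\}$ by the no-$(\mc{G}_t)_t$-arbitrage assumption.
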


\begin{proof}
 We apply Theorem~\ref{thm: PP} in order to show that there is no duality gap for our pricing problem:
\begin{itemize}
\item{We first show that the set $A$ is a linear space. We compute $F^{\infty}(H(\omega), 0, \omega)$ by choosing $\bar{y} = 0$ and $\bar{H}$ to be the portfolio that starts with $1 + \sup_{\omega \in \Omega} B(\omega)$ units of the bond and just follows the market development (without any trading) until the terminal time. Then, we find that
\[
\begin{array}{lll}
A&= \{H : \mc{G}\mbox{-adapted}, H(t) \geq 0 \mbox{ } \forall \mbox{ } t, S(t) \cdot \Delta H(t) = 0, \\[\smallskipamount]
&S(T) \cdot H(T-1) \geq 0, S(0) \cdot H(0) \leq 0\} = \{0\},
\end{array}
\]
 \noindent where the final equality holds since we assume that there is no arbitrage w.r.t. the filtration $(\mc{G}_t)_t$. Hence, $A = \{0\}$, which is a (trivial) linear space. Hence the first condition of Theorem~\ref{thm: PP} is satisfied.}

\item{To check the other assumption of the theorem, choose
\[
y = (0, (0)_t, (0)_t, -1) \in \mc{L}^q(\Omega, \mc{F}, P: \mb{R}^{(|I_1| + 1)T + 1}),
\]
\noindent where $0$ represents the $0$-function. Also, choose $m(w) = -1$ for all $\omega \in \Omega$. Then $m \in \mc{L}^1(\Omega, \mc{F}, P)$. Then, given $(H,u) \in \mb{R}^{T(N+1)} \times \mb{R}^{(|I_1| + 1)T + 1}$:

\[
\begin{array}{llll}
F(H, u) &\geq& S(0) \cdot H(0) &\mbox{ (from the definition of $F$)}\\[\smallskipamount]
&\geq& -z &\mbox{ (from the definition of $F$)}\\
&=& u \cdot y(\omega) + m(\omega) &\mbox{ (from the choice of $y$ and $m$)}. \\
\end{array}
\]}
\end{itemize}
This proves that the conditions of Theorem~\ref{thm: PP} are satisfied. Therefore, there is no duality gap, so the seller's price of the contingent claim is

\[
\sup_{Q \in \bar{\mc{M}}^a(S, \mc{G})} \mb{E}_Q[B].
\]


\end{proof}

\begin{remark}
We remark that Proposition 4.1 in F\"{o}llmer and Kramkov~\cite{FK} gives an expression for the seller's price of a claim with super-martingale conditions on the price process. However, we do not consider the same problem as \cite{FK}, since we have partial information. The presence of partial information results in a different type of martingale measure (we get a martingale- and super-martingale measure on the optional projection of the price process) than in the paper \cite{FK}.

\end{remark}

In order to prove the strong duality in Theorem~\ref{thm: price2}, we have assumed that no short-selling or borrowing is allowed. This is necessary in order for the space $A$ to be a linear space, as required by the strong duality characterization in Theorem~\ref{thm: PP}. However, we have not been able to find a numerical example where there actually is a duality gap. In the finite $\Omega$ (i.e. linear programming) case, there will be no duality gap even when borrowing or short selling is allowed. Hence, if there exists an example of a duality gap, it must be in the infinite $\Omega$ case.

This leads one to believe that it may be possible to close the duality gap in general. However, we have not found a way to achieve this through our convex analysis of the problem. Another option is to try to close the duality gap in the short selling case by analyzing the primal problem using Lagrange duality, see e.g. Pinar~\cite{Pinar1}, \cite{Pinar2}. However, as this methodology is equivalent to our convex duality approach in the discrete time setting, it seems likely that one will run into a similar problem with linearity. This is an open problem for further research.


\begin{example}
We illustrate the previous results by considering a simple numerical example. Although the results of this paper hold when $\Omega$ is an arbitrary set, we consider a situation where $\Omega$ is finite. This simplifies the intuition and allows for illustration via scenario trees.

Consider times $t=0,1,2$, $\Omega := \{\omega_1, \omega_2, \ldots, \omega_5\}$ and a market with two assets: one bank account $S_0$ and one risky asset $S_1$. Assume that the market is discounted, so $S_0(t,\omega)=1$ for all times $t$ and all $\omega \in \Omega$. Let
\[
S_1(t,\omega) := X(t,\omega) + \xi(t,\omega),
\]
\noindent i.e.  the price of the risky asset is composed of two other processes, $X$ and $\xi$. The seller does not observe these two processes, only the prices. The following scenario trees show the development of the processes $X$ and $\xi$, as well as the price development observed by the seller. Note that we only display the information needed in the following calculations.

%
 \begin{figure}[ht]
 \setlength{\unitlength}{0.7mm}
 \begin{picture}(50,80)(-40,0)
  \put(20,45){\circle*{3}} 
  \put(40,65){\circle*{3}} 
  \put(40,25){\circle*{3}} 
  \put(60,13){\circle*{3}} 
  \put(60,37){\circle*{3}} 
  \put(60,57){\circle*{3}} 
  \put(60,65){\circle*{3}} 
  \put(60,73){\circle*{3}} 
  \put(20,45){\line(1,1){20}}
  \put(20,45){\line(1,-1){20}}
  \put(40,65){\line(5,0){20}}
  \put(40,65){\line(5,2){20}}
  \put(40,65){\line(5,-2){20}}
  \put(40,25){\line(5,3){20}}
  \put(40,25){\line(5,-3){20}}

  \put(2,53){\makebox(0,0){$\Omega = \{\omega_1, \omega_2, \ldots, \omega_5\}$}}
  \put(37,75){\makebox(0,0){$\xi=3, \{\omega_1, \omega_3, \omega_5\}$}}
  \put(37,12){\makebox(0,0){$\xi=5, \{\omega_2, \omega_4\}$}}
  \put(68,73){\makebox(0,0){$\omega_1$}}
  \put(68,57){\makebox(0,0){$\omega_5$}}
    \put(68,65){\makebox(0,0){$\omega_3$}}
  \put(68,37){\makebox(0,0){$\omega_2$}}
  \put(68,13){\makebox(0,0){$\omega_4$}}

  \put(1,4){\line(35,0){80}}
  \put(20,4){\circle*{1}}
  \put(40,4){\circle*{1}}
  \put(60,4){\circle*{1}}

  \put(20,1){\makebox(0,0){$t = 0$}}
  \put(40,1){\makebox(0,0){$t = 1$}}
  \put(60,1){\makebox(0,0){$t = T = 2$}}
 \end{picture}
\caption{The process $\xi$}
\end{figure}
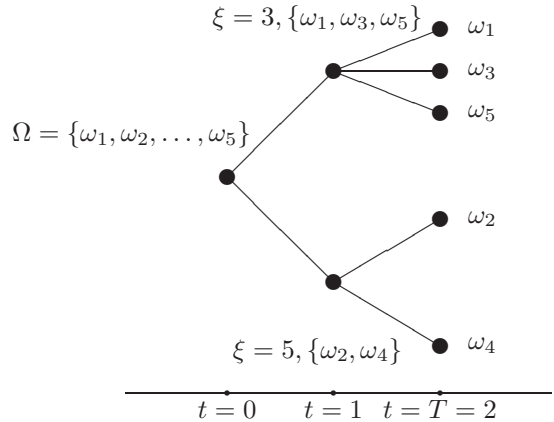

%
 \begin{figure}[ht]
 \setlength{\unitlength}{0.7mm}
 \begin{picture}(50,80)(-40,0)
  \put(20,45){\circle*{3}} 
  \put(40,65){\circle*{3}} 
  \put(40,25){\circle*{3}} 
  \put(60,13){\circle*{3}} 
  \put(60,25){\circle*{3}} 
  \put(60,37){\circle*{3}} 
  \put(60,57){\circle*{3}} 
  \put(60,73){\circle*{3}} 
  \put(20,45){\line(1,1){20}}
  \put(20,45){\line(1,-1){20}}
  \put(40,65){\line(5,2){20}}
  \put(40,65){\line(5,-2){20}}
  \put(40,25){\line(5,3){20}}
  \put(40,25){\line(5,0){20}}
  \put(40,25){\line(5,-3){20}}

  \put(2,53){\makebox(0,0){$\Omega = \{\omega_1, \omega_2, \ldots, \omega_5\}$}}
  \put(37,75){\makebox(0,0){$X=4, \{\omega_1, \omega_2\}$}}
  \put(37,12){\makebox(0,0){$X=2, \{\omega_3, \omega_4, \omega_5\}$}}
  \put(68,73){\makebox(0,0){$\omega_1$}}
  \put(68,57){\makebox(0,0){$\omega_2$}}
  \put(68,37){\makebox(0,0){$\omega_3$}}
  \put(68,25){\makebox(0,0){$\omega_4$}}
  \put(68,13){\makebox(0,0){$\omega_5$}}

  \put(1,4){\line(35,0){80}}
  \put(20,4){\circle*{1}}
  \put(40,4){\circle*{1}}
  \put(60,4){\circle*{1}}

  \put(20,1){\makebox(0,0){$t = 0$}}
  \put(40,1){\makebox(0,0){$t = 1$}}
  \put(60,1){\makebox(0,0){$t = T = 2$}}
 \end{picture}
\caption{The process $X$}
\end{figure}

%
 \begin{figure}[ht]
 \setlength{\unitlength}{0.7mm}
 \begin{picture}(50,80)(-40,0)
  \put(20,45){\circle*{3}} 
  \put(40,65){\circle*{3}} 
  \put(40,25){\circle*{3}} 
   \put(40,45){\circle*{3}} 
 \put(60,13){\circle*{3}} 
  \put(60,32){\circle*{3}} 
  \put(60,57){\circle*{3}} 
  \put(60,73){\circle*{3}} 
 \put(60,45){\circle*{3}} 
  \put(20,45){\line(1,1){20}}
  \put(20,45){\line(1,-1){20}}
   \put(20,45){\line(1,0){20}}
 \put(40,65){\line(5,2){20}}
  \put(40,65){\line(5,-2){20}}
  \put(40,25){\line(3,1){20}}
  \put(40,45){\line(1,0){20}}
  \put(40,25){\line(5,-3){20}}

  \put(-2,53){\makebox(0,0){$S_1=6, \Omega = \{\omega_1, \omega_2, \ldots, \omega_5\}$}}
  \put(40,75){\makebox(0,0){$S_1=7, \{\omega_1, \omega_4\}$}}
  \put(40,12){\makebox(0,0){$S_1=9, \{\omega_3, \omega_5\}$}}
   \put(40,40){\makebox(0,0){$S_1=5, \{\omega_2\}$}}
  \put(73,73){\makebox(0,0){$S_1= 3, \omega_1$}}
  \put(73,56){\makebox(0,0){$S_1=8, \omega_4$}}
    \put(73,44){\makebox(0,0){$S_1=9, \omega_2$}}
  \put(73,32){\makebox(0,0){$S_1=7, \omega_3$}}
  \put(73,13){\makebox(0,0){$S_1=4, \omega_5$}}

  \put(1,4){\line(35,0){80}}
  \put(20,4){\circle*{1}}
  \put(40,4){\circle*{1}}
  \put(60,4){\circle*{1}}

  \put(20,1){\makebox(0,0){$t = 0$}}
  \put(40,1){\makebox(0,0){$t = 1$}}
  \put(60,1){\makebox(0,0){$t = T = 2$}}
 \end{picture}
\caption{The price process $S_1$}
\end{figure}
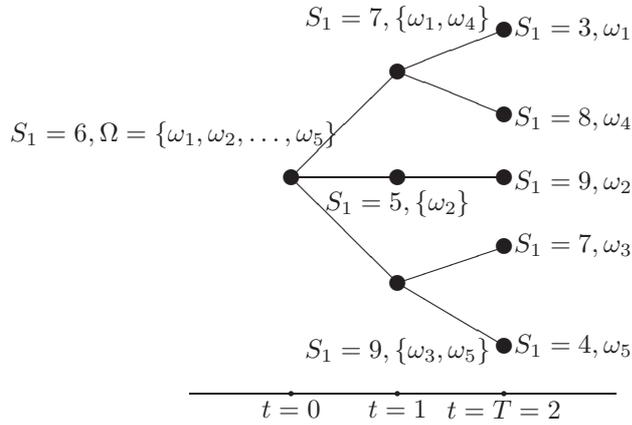
Full information in this market corresponds to observing both processes $X$ and $\xi$, i.e. the full information filtration $(\mc{F}_t)_t$ is the sigma algebra generated by $X$ and $\xi$, $\sigma(X,\xi)$. However, the filtration observed by the seller $(\mc{G}_t)_t$, generated by the price processes, is (strictly) smaller than the full information filtration. For instance, if you observe that $\xi(1)=3$ and $X(1)=4$, you know that the realized scenario is $\omega_1$. However, this is not possible to determine only through observation of the price process $S_1$. Hence, this is an example of a model with hidden processes, which is a kind of partial information that is not delayed information.

Assume that the seller is not allowed to short-sell. In this case, the seller's problem \eqref{eq: forsteligning} is to solve the following minimization problem w.r.t. $v$ and all $(\mc{G}_t)_t$-adapted trading strategies $H$:

\begin{equation}
\label{eq: LP_example}
\begin{array}{lrlll}
\inf_{v, H} &v \\[\smallskipamount]
\mbox{s.t.} \\[\smallskipamount]
&3H_1(1,\omega_1) + H_0(1,\omega_1) &\geq& B(\omega_1) \\[\smallskipamount]
&9H_1(1,\omega_2) + H_0(1,\omega_2) &\geq& B(\omega_2) \\[\smallskipamount]
&5H_1(1,\omega_3) + H_0(1,\omega_3) &\geq& B(\omega_3) \\[\smallskipamount]
&8H_1(1,\omega_1) + H_0(1,\omega_1) &\geq& B(\omega_4) \\[\smallskipamount]
&4H_1(1,\omega_3) + H_0(1,\omega_3) &\geq& B(\omega_5) \\[\smallskipamount]
&H_0(1,\omega_1) - H_0(0) + 7\big( H_1(1,\omega_19 - H_1(0)) \big) &=& 0 \\[\smallskipamount]
&H_0(1,\omega_3) - H_0(0) + 5\big( H_1(1,\omega_3) - H_1(0)) \big) &=& 0 \\[\smallskipamount]
&H_0(1,\omega_2) - H_0(0) + 9\big( H_1(1,\omega_2) - H_1(0)) \big) &=& 0 \\[\smallskipamount]
&H_j(0) \geq 0, H_j(1, \omega_i) &\geq& 0 \mbox{ for } j=0,1, i=1,2,3 \\[\smallskipamount]
&H_0(0) + 6H_1(0) &\leq& v
\end{array}
\end{equation}
\noindent where $H_1(1,\omega_1) = H_1(1,\omega_4)$ and $H_1(1,\omega_3) = H_1(1,\omega_5)$ due to $H$ being $(\mc{G}_t)_t$-adapted. This is a linear programming problem which can be solved using the simplex algorithm. Note that the simplex algorithm is a duality method, which leads to a dual problem equivalent to the one we have derived in Section~\ref{sec: main}. An advantage with solving this problem directly is that we get the trading strategy $H$ explicitly. However, a downside with solving problem~\eqref{eq: LP_example}  directly is that for each new claim $B$, the problem must be solved from scratch. This is not the case when solving the dual problem instead. From Theorem~\ref{thm: price2}, the dual problem for the price of the claim is:

\begin{equation}
\label{eq: dualproblem_eksempel}
\sup_{Q \in \bar{\mc{M}}^a(S,\mc{G})} \mb{E}_Q[B]
\end{equation}

\noindent where $\bar{\mc{M}}^a(S,\mc{G})$ is the set of absolutely continuous probability measures making the price process $S_1$ a $(\mc{G}_t)_t$-conditional super-martingale (and $S_0$, but this is trivial since the price processes are discounted). In order to solve this problem, we must find the set $\bar{\mc{M}}^a(S,\mc{G})$. By using the definition of $\bar{\mc{M}}^a(S,\mc{G})$, we get a system of linear inequalities to solve. By solving these, using for example Fourier Motzkin elimination, we find that

\begin{equation}
\begin{array}{lll}

\label{eq: mengde}
 \bar{\mc{M}}_1^a(S,\mc{G}) = \{Q = (q_1,q_2, \ldots, q_5) : 0 \leq q_3 \leq \frac{6}{21}, 0 \leq q_4 \leq 1-q_3, \\[\smallskipamount]
 0 \leq q_5 \leq 1- q_3 - q_4, 0 \leq q_1 \leq 1 - q_3-q_4-q_5 \mbox{ and } q_2=1-q_1-q_3-q_4-q_5\}
\end{array}
 \end{equation}

Hence, given some claim $B$, one can solve the problem \eqref{eq: dualproblem_eksempel} for the set in \eqref{eq: mengde} in order to find the seller's price. When one would like to find prices for several claims $B_1, B_2, \ldots, B_m$, solving the dual problem is simpler than solving the primal LP problem since the set $ \bar{\mc{M}}^a(S,\mc{G})$ is the same for all the claims.

\end{example}

\section{Conclusions}
\label{sec: finalremarks}

In this paper, we have shown how convex duality can be used to obtain pricing results for a seller of a claim who has partial information and is facing short selling constraints in a discrete time financial market model. This gives new results, which are summarized in Theorem~\ref{thm: omskriving} and Theorem~\ref{thm: price2}.

It seems natural that these results can be generalized to a model with continuous time, possibly using a discrete time approximation. However, this may be quite technical.

\appendix

\section{Conjugate duality and paired spaces}
\label{sec: conjugate}

Conjugate duality theory (also called convex duality), introduced by Rockafellar~\cite{Rockafellar}, provides a method for solving very general optimization problems via dual problems.

Let $X$ be a linear space, and let $f: X \rightarrow \mathbb{R}$ be a function. The minimization problem $\min_{x \in X} f(x)$ is called the \emph{primal problem}, denoted $(P)$. In order to apply the conjugate duality method to the primal problem, we consider an abstract optimization problem $\min_{x \in X} F(x,u)$ where $F: X \times U \rightarrow \mathbb{R}$\index{$F(x,u)$} is a function such that $F(x,0) = f(x)$, $U$ is a linear space and $u \in U$ is a parameter chosen depending on the particular problem at hand. The function $F$ is called the \emph{perturbation function}. We would like to choose $(F,U)$ such that $F$ is a closed, jointly convex function of $x$ and $u$.

Corresponding to this problem, one defines the \emph{optimal value function}
\begin{equation}
 \varphi(u)\index{$\varphi(\cdot)$} := \inf_{x \in X} F(x,u) \hspace{0,05cm}, \hspace{0,3cm} u \in U.
\end{equation}
Note that if the perturbation function $F$ is jointly convex, then the optimal value function $\varphi(\cdot)$ is convex as well.

A \emph{pairing} of two linear spaces $X$ and $V$ is a real-valued bilinear form $\langle \cdot , \cdot \rangle$ on $X \times V$. Assume there is a pairing between the spaces $X$ and $V$. A topology on $X$ is \emph{compatible} with the pairing if it is a locally convex topology such that the linear function $\langle \cdot , v \rangle$ is continuous, and any continuous linear function on $X$ can be written in this form for some $v \in V$. A compatible topology on $V$ is defined similarly. The spaces $X$ and $V$ are \emph{paired spaces} if there is a pairing between $X$ and $V$ and the two spaces have compatible topologies with respect to the pairing. An example is the spaces $X = L^{p}(\Omega, F, P)$ and $V = L^{q}(\Omega, F, P)$, where $\frac{1}{p} + \frac{1}{q} = 1$. These spaces are paired via the bilinear form $\langle x , v \rangle = \int_{\Omega} x(s) v(s) dP(s)$.

In the following, let $X$ be paired with another linear space $V$, and $U$ paired with the linear space $Y$. The choice of pairings may be important in applications.  Define the \emph{Lagrange function} $K: X \times Y \rightarrow \bar{\mathbb{R}}$ to be $K(x,y) := \inf\{F(x,u) + \langle u,y \rangle : u \in U\}$. The following Theorem~\ref{thm: Lagrange} is from Rockafellar~\cite{Rockafellar} (see Theorem 6 in \cite{Rockafellar}).
\begin{theorem}
 \label{thm: Lagrange}
The Lagrange function $K$ is closed, concave in $y \in Y$ for each $x \in X$, and if $F(x,u)$ is closed and convex in $u$
\begin{equation}
 \label{eq:box}
f(x) = \sup_{y \in Y} K(x,y).
\end{equation}
\end{theorem}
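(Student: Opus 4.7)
The plan is to recognize that for each fixed $x$, the Lagrange function $K(x,\cdot)$ is, up to reflection, the negative of a convex conjugate, and then read off both conclusions from that viewpoint. Writing $F_x(u) := F(x,u)$ and letting $F_x^{*}$ denote its convex conjugate with respect to the pairing of $U$ and $Y$,
\[
K(x,y) \;=\; \inf_{u \in U}\{F_x(u) + \langle u,y \rangle\} \;=\; -\sup_{u \in U}\{\langle u,-y\rangle - F_x(u)\} \;=\; -F_x^{*}(-y).
\]
Everything in the theorem will fall out of this single identity together with the Fenchel--Moreau biconjugate theorem in paired locally convex spaces.

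For the first assertion, I would argue with no hypothesis on $F$ at all. Regardless of whether $F_x$ is convex, the conjugate $F_x^{*}$ is by definition the pointwise supremum over $u \in U$ of the family of continuous affine functions $y \mapsto \langle u,y\rangle - F_x(u)$. A pointwise supremum of continuous affine functions is automatically convex and lower semi-continuous on $Y$. Hence $-F_x^{*}(-y)$, as a function of $y$, is concave and upper semi-continuous, i.e.\ closed and concave. This gives the first half of the statement for every $x \in X$.

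For the identity $f(x) = \sup_{y \in Y} K(x,y)$, I would make the substitution $y' = -y$ in the supremum to obtain
\[
\sup_{y \in Y} K(x,y) \;=\; \sup_{y' \in Y}\bigl(-F_x^{*}(y')\bigr) \;=\; \sup_{y' \in Y}\bigl(\langle 0,y'\rangle - F_x^{*}(y')\bigr) \;=\; F_x^{**}(0),
\]
the last equality being the definition of the biconjugate of $F_x$ evaluated at $u = 0$. Under the hypothesis that $F(x,\cdot)$ is closed and convex, the Fenchel--Moreau theorem for paired spaces yields $F_x^{**} = F_x$, and evaluating at $u = 0$ gives $F_x^{**}(0) = F_x(0) = F(x,0) = f(x)$, as required. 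The only place any assumption on $F$ is used — and therefore the main point requiring care — is the invocation of Fenchel--Moreau, which is precisely where the compatibility of the topologies of $U$ and $Y$ with the bilinear pairing (already built into the definition of paired spaces in the preceding discussion) does its work.
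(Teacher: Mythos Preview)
Your argument is correct and is precisely the standard one: the identity $K(x,y) = -F_x^{*}(-y)$ immediately gives closedness and concavity in $y$, and Fenchel--Moreau then yields $\sup_y K(x,y) = F_x^{**}(0) = F_x(0) = f(x)$ under the closed-convex hypothesis. The paper does not supply its own proof of this theorem but simply refers the reader to Rockafellar~\cite{Rockafellar}, where exactly this biconjugate argument appears, so your proposal coincides with the cited source.
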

For the proof of this theorem, see Rockafellar~\cite{Rockafellar}. Motivated by Theorem~\ref{thm: Lagrange}, we define the \emph{dual problem} of $(P)$,
\begin{eqnarray}
 (D) \hspace{1cm} \max_{y \in Y} g(y) \nonumber
\end{eqnarray}
\noindent where $g(y) := \inf_{x \in X} K(x,y)$.

One reason why problem $(D)$ is called the dual of the primal problem $(P)$ is that, from equation (\ref{eq:box}), problem $(D)$ gives a lower bound on problem $(P)$. This is called \emph{weak duality}. Sometimes, one can prove that the primal and dual problems have the same optimal value. If this is the case, we say that there is \emph{no duality gap} and that \emph{strong duality holds}. The next theorem (see Theorem 7 in Rockafellar~\cite{Rockafellar}) is important:
\begin{theorem}
 \label{thm: dualprob}
The function $g$ in $(D)$ is closed and concave. Also
\begin{eqnarray}
\sup_{y \in Y} g(y) = \cl (\co (\varphi))(0)  \nonumber
\end{eqnarray}
and
\begin{eqnarray}
\inf_{x \in X} f(x) = \varphi(0). \nonumber
\end{eqnarray}
\end{theorem}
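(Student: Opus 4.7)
The plan is to reduce the entire theorem to the classical Fenchel--Moreau identification of the biconjugate with the closed convex hull, after rewriting the dual objective $g$ as (minus) the Fenchel conjugate of the value function $\varphi$. The third identity $\inf_{x \in X} f(x) = \varphi(0)$ is immediate from the definition $\varphi(u) := \inf_{x \in X} F(x,u)$ together with the hypothesis $F(x,0) = f(x)$, so no further work is needed for that part.

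The first key step is to swap the two infima appearing in $g$. From $K(x,y) = \inf_{u \in U}\{F(x,u) + \langle u, y\rangle\}$ and $\varphi(u) = \inf_{x \in X} F(x,u)$, interchanging the order of infima gives
\[
g(y) \;=\; \inf_{x \in X} K(x,y) \;=\; \inf_{u \in U}\bigl\{\varphi(u) + \langle u, y\rangle\bigr\} \;=\; -\varphi^{*}(-y),
\]
where $\varphi^{*}(v) := \sup_{u \in U}\{\langle u, v\rangle - \varphi(u)\}$ is the Fenchel conjugate with respect to the pairing $\langle U, Y\rangle$. Since the conjugate of any extended-real-valued function is automatically closed (lower semicontinuous in the compatible topology) and convex, the function $y \mapsto -\varphi^{*}(-y) = g(y)$ is closed and concave, which establishes the first claim.

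For the identity $\sup_y g(y) = \cl(\co\varphi)(0)$, I would take the supremum over $y \in Y$ in the formula above and use that $Y$ is a linear space so $y \mapsto -y$ is a bijection, giving
\[
\sup_{y \in Y} g(y) \;=\; -\inf_{y \in Y}\varphi^{*}(y) \;=\; \sup_{y \in Y}\bigl\{\langle 0, y\rangle - \varphi^{*}(y)\bigr\} \;=\; \varphi^{**}(0),
\]
where $\varphi^{**}$ is the biconjugate of $\varphi$ (viewed as a function on $U$). The Fenchel--Moreau theorem in the paired locally convex setting of $(U,Y)$ then yields the pointwise identity $\varphi^{**} = \cl(\co\varphi)$, and evaluating at $u=0$ completes the proof.

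The only genuine substance is the final appeal to Fenchel--Moreau in the abstract paired-space framework; once that is granted, everything reduces to a routine interchange of infima and a change of variables $y \mapsto -y$. In this setting the identification $\varphi^{**} = \cl(\co\varphi)$ follows from Hahn--Banach separation of points from the epigraph of the closed convex hull, which is precisely why the notion of compatible topology built into the definition of paired spaces is imposed; I would cite the corresponding statement in Rockafellar~\cite{Rockafellar} rather than reprove it.
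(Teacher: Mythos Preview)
Your proof is correct and follows the standard route: identify $g(y)=-\varphi^{*}(-y)$ by interchanging the two infima, read off closedness and concavity of $g$ from the fact that Fenchel conjugates are always closed convex, and then recognize $\sup_y g(y)=\varphi^{**}(0)=\cl(\co\varphi)(0)$ via Fenchel--Moreau; the identity $\varphi(0)=\inf_x f(x)$ is immediate from the definitions. The paper itself does not supply a proof of this theorem at all---it simply cites Rockafellar~\cite{Rockafellar} (Theorem~7 there)---so there is nothing to compare against beyond noting that your argument is precisely the one Rockafellar gives.
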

\noindent (where $\cl$ and $\co$ denote respectively the closure and the convex hull of a function, see Rockafellar~\cite{RockafellarConvex}). For the proof, see Rockafellar~\cite{Rockafellar}. Theorem~\ref{thm: dualprob} implies that \emph{if the value function $\varphi$ is convex, the lower semi-continuity of $\varphi$ is a sufficient condition for the absence of a duality gap}.



\end{document}